\documentclass[letterpaper, 10 pt, conference]{ieeeconf}

\usepackage{xcolor}
\usepackage[utf8]{inputenc}
\usepackage{amsmath}
\usepackage{amssymb}
\usepackage{graphicx,balance}
\usepackage{pifont}
\usepackage{bm}
\usepackage{dsfont}
\usepackage{subcaption}

\usepackage{siunitx}

\usepackage{amsfonts}
\usepackage{url}
\usepackage[pdftex,plainpages = false, colorlinks=true, linkcolor=black, citecolor = black, urlcolor = blue,pagebackref=false,hypertexnames=false, plainpages=false, pdfpagelabels]{hyperref}
\usepackage{balance}
\usepackage[capitalize]{cleveref}
\usepackage[sort,compress]{cite}

\crefname{section}{Section}{Sections}
\crefname{theorem}{Theorem}{Theorems}
\crefname{lemma}{Lemma}{Lemmas}
\crefname{table}{Table}{Tables}
\crefformat{equation}{(#2#1#3)}
\crefname{algocf}{Algorithm}{Algorithms}
\Crefname{algocf}{Algorithm}{Algorithms}
\crefname{ALC@unique}{Line}{Lines}

\newtheorem{thm}{Theorem}
\newtheorem{lem}{Lemma}

\newtheorem{asm}{Assumption}
\crefname{asm}{Assumption}{Assumptions}
\Crefname{asm}{Assumption}{Assumptions}

\newcommand{\x}{x}
\newcommand{\s}{s}

\renewcommand{\u}{u}
\renewcommand{\d}{d}
\newcommand{\mDx}{\mathcal{D}}
\newcommand{\z}{z}
\newcommand{\y}{y}
\newcommand{\e}{e}
\newcommand{\iss}{\overline{\Omega}^*}
\newcommand{\uprf}{\u}
\newcommand{\cvx}{\bar{\mathcal{S}}_t}
\newcommand{\p}{p}
\renewcommand{\v}{v}

\newcommand{\nn}{L_{\theta}}

\usepackage{accents}

\usepackage{paralist}

\crefformat{chapter}{\S#2#1#3}
\crefmultiformat{chapter}{\S\S#2#1#3}{and~#2#1#3}{, #2#1#3}{, and~#2#1#3}
\crefformat{section}{\S#2#1#3}
\crefmultiformat{section}{\S\S#2#1#3}{and~#2#1#3}{, #2#1#3}{, and~#2#1#3}

\usepackage{algorithm,algorithmic}

\usepackage{tikz}
\usetikzlibrary{shapes,arrows,arrows.meta,positioning,calc}

\newcommand{\gdn}{GUARDIAN}
\definecolor{review}{RGB}{0, 0, 0}
\newcommand{\review}[1]{{\color{review}#1}}
\definecolor{revieww}{RGB}{0, 0, 0}
\newcommand{\revieww}[1]{{\color{revieww}#1}}

\title{GUARDIAN: Safety Filtering for Systems with Perception Models Under Adversarial Attack}
\author{Nicholas Rober, Alex Rose, and Jonathan P.\ How
}

\begin{document}

\newif\ifarxiv
\arxivfalse

\maketitle
\thispagestyle{empty}
\setlength{\skip\footins}{8pt}
\begingroup
\renewcommand{\thefootnote}{}
\footnotetext{Submitted February $4^\mathrm{th}$, 2026.
This material is based upon work supported by the Naval Information Warfare Center (NIWC) Atlantic under Contract No. N6523623C8011.
The views, opinions, and/or findings expressed are those of the author(s) and should not be interpreted as representing the official views or policies of the Department of Defense or the U.S. Government.}
\footnotetext{Code: \url{https://github.com/mit-acl/guardian}}
\endgroup

\newif\ifshortabs
\shortabstrue
\begin{abstract}
  \ifshortabs
  Safety filtering is an effective method for enforcing constraints in safety-critical systems, but existing methods typically assume perfect state information.
  This limitation is especially problematic for systems that rely on neural network (NN)-based state estimators, which can be highly sensitive to noise and adversarial input perturbations.
  We address these problems by introducing \gdn: Guaranteed Uncertainty-Aware Reachability Defense against Adversarial INterference, a safety filtering framework that provides formal safety guarantees for systems with NN-based state estimators.
  At runtime, \gdn\ uses neural network verification tools to provide guaranteed bounds on the system's state estimate given possible perturbations to its observation.
  It then uses a modified Hamilton-Jacobi reachability formulation to construct a safety filter that adjusts the nominal control input based on the verified state bounds and safety constraints.
  The result is an uncertainty-aware filter that ensures safety despite the \review{use of} an NN estimator with noisy, possibly adversarial, observations.
  Theoretical \review{and} numerical \review{results show how} \gdn\ effectively defends systems against adversarial attacks that would otherwise \review{cause} violation of safety constraints.
  
  \else
  Safety filtering is an effective method for enforcing constraints in safety-critical systems, but existing methods typically assume perfect state information.
  This limitation is especially problematic for systems that rely on neural network (NN)-based state estimators, which can be highly sensitive to noise and adversarial input perturbations.
  We address these problems by introducing \gdn: Guaranteed Uncertainty-Aware Reachability Defense against Adversarial INterference, a safety filtering framework that provides formal safety guarantees for systems with NN-based state estimators.
  At runtime, \gdn\ uses neural network verification tools to provide guaranteed bounds on the system's state estimate given possible perturbations to its observation.
  It then uses a modified Hamilton-Jacobi reachability formulation to construct a safety filter that adjusts the nominal control input based on the verified state bounds and safety constraints.
  The result is an uncertainty-aware filter that ensures safety despite the system's reliance on an NN estimator with noisy, possibly adversarial, input observations.
  Theoretical analysis and numerical experiments demonstrate that \gdn\ effectively defends systems against adversarial attacks that would otherwise lead to a violation of safety constraints.
  \fi
  \vspace{-4pt}
\end{abstract}


\section{Introduction}
\label{sec:introduction}
Neural network (NN)-based machine learning continues to demonstrate its utility as a tool for building autonomous systems with impressive capabilities~\cite{singh2022reinforcement}. 
However, NNs are sensitive to input perturbations~\cite{szegedy_intriguing_2014}, posing a significant safety risk associated with their application~\cite{bak_zero-one_2024}.
This threat is exacerbated for systems with NN-based perception modules, i.e., observation-based neural feedback loops (ONFLs), because adversarial agents can manipulate system-level behavior by attacking the perception model's observations~\review{\cite{arnstrom2024data,robey_jailbreaking_2025}}. 
Thus, it is important to develop protective methods for systems with learned perception modules subject to adversarial attacks before they can be applied in safety-critical scenarios.

\begin{figure}
    \centering
    \includegraphics[width=0.66\linewidth]{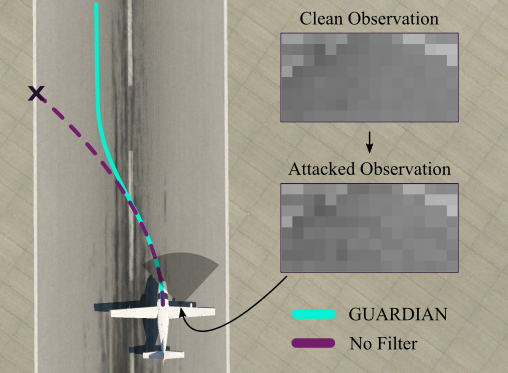}
    \caption{\gdn\ protects against adversarial attacks.}
    \label{fig:guardian_intro}
    \vspace{-20pt}
\end{figure}

In response to the threat posed by adversarial attacks, NN verification (NNV) has become an active area of research.
A wide range of open-loop verification techniques~\cite{zhang_efficient_2018,xu_automatic_2020,jaxverify} have been developed to calculate guaranteed bounds on an NN output given a nominal input and a set of possible perturbations.
These ideas have been extended to design-time verification of closed-loop systems, primarily via set-propagation-based reachability analysis techniques, for systems with NN controllers~\cite{rober_constraint-aware_2024,wang_verifying_2025} and, less frequently, ONFLs~\cite{katz_verification_2022,lin_robust_2025}.
While \cite{katz_verification_2022,lin_robust_2025} shows how NNV can be used for ONFLs prior to deployment, it is difficult to incorporate a priori unknown perturbations from adversarial attacks that are infused at runtime.
Recently,~\cite{lin_robust_2025} accomplished this by employing Hamilton-Jacobi (HJ) reachability analysis, but their approach is dependent on an abstraction of the closed-loop system that is difficult to obtain for complicated control pipelines. 
Thus, to protect ONFLs from adversarial attacks, we instead consider runtime safety assurances.

Safety filtering~\cite{hsu_safety_2024,wabersich_data-driven_2023}, i.e., modifying nominal control inputs to ensure satisfaction of safety constraints, is a common approach to ensure safety at runtime.
Recent work has shown that NNV tools can be used as a runtime safety monitor~\cite{mahesh_safe_2025}, but they have not yet been demonstrated as a means of safety filtering.
Instead, most safety filters are constructed using either control barrier functions (CBFs)~\cite{ames2016control} or Hamilton-Jacobi reachability analysis~\cite{bansal_hamilton-jacobi_2017}.
While most work on safety filtering assumes perfect state information, several recent works have focused on incorporating observation models and accounting for imperfect state information~\review{\cite{tan2024safety,tan2025secure,dean_guaranteeing_2021,nanayakkara_safety_2025,das_safe_2025,lin2026stochastic}}.
\review{Namely, \cite{tan2024safety,tan2025secure} develop CBF-based defenses for sensor-spoofing attacks on linear systems.} 
\review{For control-affine nonlinear systems,} \cite{dean_guaranteeing_2021} proposed a measurement-robust (MR)-CBF formulation that accounts for state-dependent measurement error via a more restrictive CBF condition.
Furthermore, \cite{nanayakkara_safety_2025} developed an alternative robust (R)-CBF method that similarly accounts for measurement error and is guaranteed to exist if there \review{is a standard} CBF, but does not capture state-dependent estimation error.
Additionally, \review{in the presence of large estimation errors}, R-CBFs are proven to be invariant for an inflated region around the original safe set, leading to conservative behavior or possible safety violations \review{under strong adversarial attacks}.
Finally, \cite{das_safe_2025} extends \cite{nanayakkara_safety_2025} by adaptively tuning an R-CBF to be less conservative and capture state-dependent error, but again is invariant for an inflated version of the safe set.
Moreover, while \cite{dean_guaranteeing_2021,nanayakkara_safety_2025,das_safe_2025} all incorporate measurement error, determining that error in the context of adversarial attacks has not yet been explored.

In contrast to \cite{dean_guaranteeing_2021,nanayakkara_safety_2025,das_safe_2025}, this paper proposes an HJ-reachability-based safety filter, \gdn: Guaranteed Uncertainty-Aware Reachability Defense against Adversarial INterference.
\gdn\ uses state-of-the-art NNV tools to compute state uncertainty given a nominal observation and a bound on the strength of possible adversarial attacks.
This uncertainty set is then incorporated into a modified HJ-reachability formulation that filters potentially unsafe nominal control inputs, ensuring provably safe closed-loop behavior under adversarial influence.
We present several examples illustrating key features of \gdn, including its ability to protect image-based ONFLs, its handling of state-dependent vulnerability to adversarial attacks, and its performance relative to MR-CBFs~\cite{dean_guaranteeing_2021}, R-CBFs~\cite{nanayakkara_safety_2025}, and R-CBF-QPs~\cite{das_safe_2025}. 
%
%
The main contributions of this work include:
\begin{itemize}
  \item A modified HJ reachability formulation that incorporates state-estimation uncertainty, enabling safety guarantees for \review{control-affine} single-input systems with estimation modules subject to \review{perturbed} observations.
  \item \gdn: a unified safety-filtering framework that combines our modified HJ reachability with modern NNV techniques to protect systems with NN-based perception modules from adversarial attacks.
  \item Numerical examples demonstrating \gdn's effectiveness in safeguarding ONFLs against adversarial manipulation and comparing its performance to CBF-based approaches~\cite{dean_guaranteeing_2021,nanayakkara_safety_2025,das_safe_2025}.
\end{itemize}

\section{Preliminaries}
\label{sec:prelims}
\subsection{System Dynamics}
\label{sec:dynamics}
Consider the nonlinear discrete-time system with state vector $\x_{\color{review}t} \in \mathcal{X} \subseteq \mathbb{R}^{n_x}$ and control-affine dynamics
\begin{equation}
\label{eqn:dynamics}
    \x_{t+1} = f_d(\x_t, \d_t) +  g(\x_t)\u_t \triangleq f(\x_t, \u_t, \d_t)
\end{equation}
where $\u_t \in [\underline{u}, \overline{u}] \triangleq \mathcal{U} \subset \mathbb{R}$ is the scalar control input, and $\d_t \in \review{\mDx} \subset \mathbb{R}^{n_x}$ is the disturbance term capturing the predictive uncertainty of the system due to unknown external forces, actuator degradation, and modeling error.
While our approach may work in practice for multidimensional $\u$, we consider the scalar case here due to improved interpretability of the theoretical argument presented in \cref{sec:theory}.
The system has a sensor providing measurements $\y_t \in \mathcal{Y} \subseteq \mathbb{R}^{n_y}$ from the environment according to the function $\y_t = h(\x_t, \d^y_t)$ where $\d^y_t \in \mathcal{D}^y$ represents possible sensor noise.
The sensor measurements are passed to an \review{NN estimator $\nn: \mathcal{Y} \rightarrow \mathcal{X}$ that provides an estimate} of the system's state $\hat{\x}_t$, i.e., $\hat{\x}_t = \nn(\y_t)$, \review{and is assumed to have an error bounded by $\e_{\hat{\x}} \in \mathbb{R}_{\geq0}^{n_x}$, i.e., $|\hat{\x}_{t,i} - \x_{t,i}| \leq \e_{\hat{\x},i}\ \forall i\in\{1\ldots n_x\}$, which can be determined and verified in a post-training calibration stage.}
The system also has a controller $K_d: \mathcal{X} \rightarrow \mathcal{U}$, which generates an input $\u_t$ based on the state estimate from $\nn$.
Finally, the system is subject to safety constraints encoded by designing a function $c: \mathcal{X} \rightarrow \mathbb{R}$ (e.g., a signed distance function) such that $\mathcal{C} \triangleq \{\x\ |\ \review{c(\x) \geq 0}\}$ is the set of \review{safe} states.

\subsection{Adversarial Attacks}
Adversarial attacks are constructed perturbations to an NN's input that cause it to behave in an undesirable way.
\review{The results presented in \cref{sec:numerical_results} consider targeted, whitebox attacks, but the approach proposed in \cref{sec:approach} works for general $\epsilon$-bounded perturbations.}
\review{The goal of the attacker is to find an adversarial perturbation $\Delta\y \in \mathbb{R}^{n_y}$} applied to the input of an NN $\pi: \mathcal{Y} \rightarrow \mathcal{X}$ \review{with known structure and parameters, where ${\tilde{\y} \triangleq \y + \Delta\y}$.
The perturbation is constrained to satisfy $\|\Delta\y\|_\infty \leq \epsilon$ for some maximum perturbation $\epsilon \in \mathbb{R}_{>0}$, and} is chosen to approximately minimize the distance between a target $\x^*\in\mathbb{R}^{n_x}$ and the adversarially perturbed NN output ${\tilde{\hat{\x}} = \pi(\tilde{\y})}$.
\review{More detail on generating these attacks is given in \cref{sec:numerical_results}.}
As will be shown, \review{such adversarial attacks can trick an ONFL into violating safety constraints if it is not equipped with a safety filter designed to protect against them}.

\subsection{Neural Network Verification}
Neural network verification (NNV) refers to a set of techniques that formally analyze the computation patterns of a given NN to make statements about its behavior as follows:
\begin{thm}[NN Robustness Verification~\cite{zhang_efficient_2018}] \label{thm::nnv}
  Given a neural network $\pi: \mathbb{R}^{n_z} \rightarrow \mathbb{R}^{n_o}$ and a hyper-rectangular set of possible inputs $\mathcal{Z}$, there exist two explicit functions
  \begin{equation}
    \underline{\pi}(\z) = \Psi \z + \alpha, \quad \overline{\pi}(\z) = \Xi \z + \beta
  \end{equation}
such that the inequality $\underline{\pi}(\z) \leq \pi(\z) \leq \overline{\pi}(\z)$ holds element-wise for all $z \in \mathcal{Z}$, with $\Psi, \Xi \in \mathbb{R}^{n_o \times n_z}$ and $\alpha, \beta \in \mathbb{R}^{n_o}$.
\end{thm}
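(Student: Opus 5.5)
The proof builds the two affine functions by propagating affine bounds backward through the network, as in CROWN~\cite{zhang_efficient_2018}. I will assume $\pi$ is a feedforward network $\pi(\z) = W^{(L)}\sigma(W^{(L-1)}\cdots\sigma(W^{(1)}\z + b^{(1)})\cdots) + b^{(L)}$ with a scalar activation $\sigma$ (e.g.\ ReLU, sigmoid, tanh). I will also assume that on any bounded interval $\sigma$ admits a linear lower bound and a linear upper bound; the statement implicitly requires this. Because $\mathcal{Z}$ is a hyper-rectangle it is compact, so every pre-activation is bounded on it, which makes these local linear relaxations available.

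The first step is to compute pre-activation bounds layer by layer. Write $\z^{(k)}$ for the pre-activation vector at layer $k$, and suppose every entry $\z^{(k)}_j$ has known bounds $\ell^{(k)}_j \le \z^{(k)}_j \le u^{(k)}_j$ for all $\z\in\mathcal{Z}$. On each interval $[\ell^{(k)}_j,u^{(k)}_j]$ choose slopes and intercepts with $a^L_j t + c^L_j \le \sigma(t) \le a^U_j t + c^U_j$. For ReLU this is the usual triangle relaxation:
\begin{itemize}
\item if the neuron is stably inactive, both bounds are $0$;
\item if it is stably active, both bounds are the identity;
\item otherwise the upper bound is the chord through $(\ell,0)$ and $(u,u)$, and the lower bound is $\lambda t$ with $\lambda\in[0,1]$.
\end{itemize}
The base case is layer $1$: $\z^{(1)} = W^{(1)}\z+b^{(1)}$ is affine in $\z$, so its interval bounds over the box follow from Hölder's inequality using the box center and radius. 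For deeper layers, bounds on $\z^{(k+1)}$ come from the backward pass below applied to the subnetwork ending at layer $k+1$, followed by concretization over $\mathcal{Z}$.

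The second step is the backward pass, which is the core of the argument. Start from the output row $e_i^\top \pi(\z)$ and express it as $\Lambda^{(L)}\sigma(\z^{(L-1)}) + \text{const}$. To replace $\sigma(\z^{(L-1)}_j)$ by an affine function of $\z^{(L-1)}_j$ while keeping a valid upper bound, use the upper relaxation where the coefficient $\Lambda_j\ge 0$ and the lower relaxation where $\Lambda_j<0$. For the lower bound, do the reverse. Then substitute $\z^{(L-1)} = W^{(L-1)}\sigma(\z^{(L-2)})+b^{(L-1)}$ and repeat. Each substitution keeps the inequality valid, because multiplying the two-sided scalar bounds by a nonnegative or nonpositive coefficient preserves or reverses them in the intended way. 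After $L-1$ steps the remaining expression is affine in $\z$. Stacking the rows $i=1,\dots,n_o$ gives $\Xi,\beta$ for the upper bound and $\Psi,\alpha$ for the lower bound, each an explicit product of weight matrices and diagonal slope matrices with accumulated intercepts. The element-wise inequality $\underline{\pi}(\z)\le\pi(\z)\le\overline{\pi}(\z)$ on $\mathcal{Z}$ then follows by induction on the layers.

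I expect the main obstacle to be the bookkeeping in the sign-dependent choice of relaxation at each layer. The slopes chosen for a layer depend on signs of coefficients that are themselves products of later layers, so the induction hypothesis must be stated carefully. I would phrase it as follows: for each $k$, there are explicit $\Lambda^{(k)},\Delta^{(k)}$ such that for all $\z\in\mathcal{Z}$,
\[
e_i^\top\pi(\z)\le\Lambda^{(k)}\z^{(k)}+\Delta^{(k)},
\]
and symmetrically for the lower bound. The remaining point is that the relaxations are only valid on $[\ell^{(k)},u^{(k)}]$, which the recursive interval computation guarantees. If the network is not purely feedforward, the same argument applies to any computational graph of affine maps and elementwise activations, as in~\cite{xu_automatic_2020}.
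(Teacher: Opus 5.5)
Your proposal is correct. It is essentially the backward linear-bound-propagation (CROWN) argument of \cite{zhang_efficient_2018}, which is where the paper defers this result; the paper gives no proof of its own. The hypotheses you make explicit (a bounded box $\mathcal{Z}$, and activations admitting linear relaxations on bounded intervals, chosen row by row according to the sign of the backward coefficient) are exactly what the statement implicitly needs, and boundedness is essential, since $\mathrm{ReLU}$ on all of $\mathbb{R}$ has no affine upper bound.
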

\cref{thm::nnv} is proven in \cite{zhang_efficient_2018} and implemented so that the bounds $\underline{\pi}(\z)$ and $\overline{\pi}(\z)$ can be found using tools such as \texttt{CROWN}~\cite{zhang_efficient_2018}, \texttt{auto\_LiRPA}~\cite{xu_automatic_2020}, and \texttt{jax\_verify} \cite{jaxverify}.

\subsection{Hamilton-Jacobi Reachability Analysis}
\label{sec:prelims:hj_reach}
Hamilton-Jacobi (HJ) reachability analysis is an optimal control approach that poses reachability as a two-player game with worst-case disturbances pushing the system out of the safe set and control inputs acting in opposition, attempting to keep the system safe \review{\cite{bansal_hamilton-jacobi_2017}}.
In discrete-time HJ reachability analysis, the solution to the Isaacs equation~\cite{isaacs1954differential} can be obtained from dynamic programming via
\begin{align}
\label{eqn:programming}
    V_k(\x) = \min \left(c(\x),\ \max_{\u\in\mathcal{U}}\min_{\d\in\review{\mDx}}V_{k+1}(f(\x, \u, \d))\right)\\ 
    k\in\{0, 1, \ldots, T-1\},\ V_T(\x) = c(\x). \nonumber
\end{align}
The value function $V(\x) = \lim_{k\rightarrow\infty}V_k(\x)$ encodes the set of states for which there exists some action that the system can take to guarantee safety from disturbances.
Conversely, if $V(\x) < 0$, there is some sequence of disturbances that can cause a safety violation, i.e., $c(\x_T) \leq 0$, for any sequence of control inputs.
Using $V(\x)$, we can construct the maximal safe set $\Omega^* \triangleq \{\x\in\mathcal{X} \ |\ V(\x) \geq 0\}$ with a boundary $\partial \Omega^* \triangleq \{\x\in\mathcal{X} \ |\ V(\x) = 0\}$
\review{and the one-step control-disturbance-invariant safe set $\iss \triangleq \{\x\in\mathcal{X} \ |\ \min_{\u\in\mathcal{U},\d\in\mDx} V(f(\x, \u, \d)) \geq 0\}$.}
\review{The optimal} safe control policy \review{then becomes} $\u^*_{HJ}(\x)$:
\begin{equation}
\label{eqn:nominal_hj}
    \u^*_{HJ}(\x) \in \arg\max_{\u\in\mathcal{U}}\min_{\d\in \review{\mDx}}V(f(\x, \u, \d)).
\end{equation}
An HJ-based safety filter can then be constructed \review{as}
\begin{equation}\label{eqn:hj}
    K_{F,\mathrm{nom}}(\x, \u) \triangleq
    \begin{cases}
        \u, \quad & \min_{\d\in\review{\mDx}}V(f(\x, \u, \d)) \geq 0 \\
        \u^*_{HJ}, \quad & \mathrm{otherwise},
    \end{cases}
\end{equation}
which ensures that if the unfiltered input $\u$ causes the system to leave $\Omega^*$, then the safety filter will be activated.
\section{\gdn}
\label{sec:approach}
\newcommand{\dummy}{\s}
As introduced in \cref{sec:introduction}, \gdn\ is a safety filtering approach that incorporates measurement uncertainty determined via NNV into a modified HJ reachability formulation to design a safe control policy.
The first key idea in \gdn\ is to use NNV to determine a set containing all possible states $\bar{\mathcal{X}}_t$ given an adversarially perturbed $\tilde{\y}_t$.
Given that the perturbation $\|\Delta\y\review{_t} \|_\infty \leq \epsilon$, we know the true observation is contained by the set $\bar{\mathcal{Y}}_t \triangleq \{\y_t \ |\ \|\y_t - \tilde{\y}_t\|_\infty \leq \epsilon\}$. 
Applying \cref{thm::nnv}, we can then use NNV tools \review{at runtime} to generate upper and lower bounds on the output of $\nn$, i.e., $\overline{\tilde{\x}}_t$ and $\underline{\tilde{\x}}_t$, respectively, such that $\tilde{\hat{\x}}_t \in [\underline{\tilde{\x}}_t,\ \overline{\tilde{\x}}_t]$.
\review{Given the state estimate bound $\e_{\hat{\x}}$, we then construct}
\begin{equation}
    \label{eqn:gdn_nnv}
    \bar{\mathcal{X}}_t \triangleq [\underline{\tilde{\x}}_t-\e_{\hat{\x}},\ \overline{\tilde{\x}}_t+\e_{\hat{\x}}],
\end{equation}
where it can then be shown that $\x_t \in \bar{\mathcal{X}}_t$.
Defining ${\Phi(\bar{\mathcal{X}}_t,\u) \triangleq \min_{\dummy \in \bar{\mathcal{X}}_t,\d\in\review{\mDx}}V(f(\dummy, \review{\u}, \d))}$, reformulate \cref{eqn:hj}:
\begin{equation}
\label{eqn:gdn}
    K_F(\bar{\mathcal{X}}_t, \u_t) \triangleq
    \begin{cases}
        \u_t, \quad & \Phi(\bar{\mathcal{X}}_t,\u_t) \geq 0 \\
        \u^*_t, \quad & \mathrm{otherwise},
    \end{cases}
\end{equation}
where the optimal safe control policy \review{is now}
\begin{equation}
    \label{eqn:gdn_ctrl}
    \u^*_t(\bar{\mathcal{X}}_t) \in \arg\max_{\u\in\mathcal{U}}\ \Phi(\bar{\mathcal{X}}_t,\u).
\end{equation}

\newif\ifexpand
\expandfalse
\definecolor{expanded1}{RGB}{200, 0, 0}
\definecolor{expanded2}{RGB}{20, 20, 150}


\subsection{Theoretical Results}

\label{sec:theory}
To validate the safety of \gdn, we then introduce \cref{asm:safe_cone_scalar,asm:uncertainty_bound_scalar}, along with \cref{lem:monotone_improvement_scalar} and \cref{thm:gdn_scalar}.
\review{The assumptions} provide sufficient conditions to ensure there is a uniform safe control direction for any $\dummy \in \bar{\mathcal{X}}_t$ and provide that $\bar{\mathcal{X}}_t$ is a reasonable size relative to $\Omega^*$.
\begin{asm}
    \label{asm:safe_cone_scalar}
    There exists a neighborhood $\mathcal N(\partial\Omega^*) \supset \partial\Omega^*$ such that $V\in C^1(\mathcal{N}(\partial\Omega^*))$ and for any convex $\mathcal{S} \subset \mathcal N(\partial\Omega^*)$, there is a cone $\mathcal{K}(\mathcal{S)} \review{\subseteq \mathbb{R}}$ \review{of safe controls,} where
    \vspace{-0.3\baselineskip}
    \begin{equation}
        \label{eqn:gradient_condition}
        \nabla V(\review{f(\dummy, \u, \d)})^\top g(\dummy) \uprf \geq 0\ \forall \dummy \in \mathcal{S}, \review{\forall \d \in \review{\mDx},} \forall \uprf \in \mathcal{U}\review{\cap\mathcal{K}(\mathcal{S})}.
    \end{equation}
    \vspace{-1.3\baselineskip}
\end{asm}
\begin{asm}
    \label{asm:uncertainty_bound_scalar}
    The attack strength $\epsilon$ and error term $\e_{\hat{\x}}$ are \review{small enough that if $\Phi(\bar{\mathcal{X}}_t,\u_t) < 0$, then} $\bar{\mathcal{X}}_t \subset \mathcal{N}(\partial \Omega^*)\review{\cup\iss}$ 
    \review{and there is a convex $\mathcal{S}\subset\mathcal{N}(\partial \Omega^*)$ such that $(\bar{\mathcal{X}}_t \setminus \iss) \subseteq \mathcal{S}.$}
\end{asm}

Note that \cref{asm:safe_cone_scalar} can be checked offline while \cref{asm:uncertainty_bound_scalar} must be checked at runtime due to the complicated relationship between $\nn$ and possible input perturbations.
As shown in \cref{sec:numerical_results}, these assumptions allow \review{for} the application of \gdn\ in a variety of situations.
\begin{lem}
\label{lem:monotone_improvement_scalar}
    Let \review{\cref{asm:safe_cone_scalar}} hold for $\mathcal{S} \subset \mathcal{N}(\partial\Omega^*)$ and $\uprf_1, \uprf_2 \in \mathcal{U}$ satisfy $\uprf_2 - \uprf_1 \in \mathcal{K}(\mathcal{S})$.
    Then $V(f(\dummy, \uprf_1, \d)) \leq V(f(\dummy, \uprf_2, \d))\ \forall \dummy \in \mathcal{S},\ \forall \d \in \review{\mDx}$.
\end{lem}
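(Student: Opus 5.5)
The plan is to fix $\dummy\in\mathcal{S}$ and $\d\in\mDx$ and study how $V(f(\dummy,\u,\d))$ changes as the scalar input moves from $\u_1$ to $\u_2$. Control-affinity in \cref{eqn:dynamics} suggests the one-dimensional function
\begin{equation*}
    \varphi(\tau) \triangleq V\big(f_d(\dummy,\d) + g(\dummy)(\u_1 + \tau(\u_2-\u_1))\big), \quad \tau\in[0,1].
\end{equation*}
It satisfies $\varphi(0)=V(f(\dummy,\u_1,\d))$ and $\varphi(1)=V(f(\dummy,\u_2,\d))$. Since $\mathcal{U}=[\underline{u},\overline{u}]$ is an interval, every intermediate input $\u_\tau \triangleq \u_1+\tau(\u_2-\u_1)$ also lies in $\mathcal{U}$. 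The claim then reduces to showing that $\varphi$ is nondecreasing on $[0,1]$.

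The key step is the derivative of $\varphi$. Whenever the segment $\{f(\dummy,\u_\tau,\d)\}_{\tau\in[0,1]}$ lies in $\mathcal{N}(\partial\Omega^*)$, where $V\in C^1$ by \cref{asm:safe_cone_scalar}, the chain rule gives
\begin{equation*}
    \varphi'(\tau) = \nabla V(f(\dummy,\u_\tau,\d))^\top g(\dummy)\,(\u_2-\u_1).
\end{equation*}
Now apply \cref{eqn:gradient_condition} at the state $\dummy\in\mathcal{S}$ and disturbance $\d$. Because the input is scalar, the cone $\mathcal{K}(\mathcal{S})\subseteq\mathbb{R}$ is $\{0\}$, $\mathbb{R}_{\geq0}$, $\mathbb{R}_{\leq0}$, or $\mathbb{R}$. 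This means the gradient condition really fixes the sign of the scalar $\nabla V(f(\dummy,\u,\d))^\top g(\dummy)$ relative to the cone's direction. Since $\u_2-\u_1\in\mathcal{K}(\mathcal{S})$, I would read \cref{eqn:gradient_condition} as holding with the evaluation point $f(\dummy,\u_\tau,\d)$ (which uses the admissible input $\u_\tau\in\mathcal{U}$) and with the cone element $\u_2-\u_1$. This gives $\varphi'(\tau)\ge 0$ for all $\tau$. Integrating, $\varphi(1)-\varphi(0)=\int_0^1\varphi'(\tau)\,d\tau\ge0$, which is the claimed inequality. The argument holds for every $\dummy\in\mathcal{S}$ and $\d\in\mDx$.

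I expect the main obstacle to be the mismatch between how \cref{asm:safe_cone_scalar} is stated and how the argument uses it. The assumption ties the gradient evaluation point $f(\dummy,\u,\d)$ to the same $\u$ that multiplies $g(\dummy)$, and it requires that $\u\in\mathcal{U}\cap\mathcal{K}(\mathcal{S})$. The argument instead needs the sign of $\nabla V(f(\dummy,\u_\tau,\d))^\top g(\dummy)$ at arbitrary admissible $\u_\tau$, multiplied by the cone element $\u_2-\u_1$. In the scalar setting I would handle this by interpreting the condition as a sign condition on $\nabla V(f(\dummy,\u,\d))^\top g(\dummy)$ for all $\u\in\mathcal{U}$ that is compatible with the cone. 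For a nontrivial cone, for example $\mathbb{R}_{\ge0}$, the condition amounts to $\nabla V(f(\dummy,\u,\d))^\top g(\dummy)\ge0$ wherever $V$ is differentiable; if $\mathcal{K}=\{0\}$ the lemma is trivial because $\u_1=\u_2$. A second technicality is that the trajectory points $f(\dummy,\u_\tau,\d)$ must lie in the $C^1$ region $\mathcal{N}(\partial\Omega^*)$. This is implicitly required for \cref{eqn:gradient_condition} to make sense, so I would state it as part of interpreting the assumption. If needed, I would weaken the argument to a.e. differentiability with $\varphi$ absolutely continuous, so that the integral step still goes through.
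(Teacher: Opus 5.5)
Your proof is the same as the paper's: restrict to the segment $u_\tau = u_1 + \tau(u_2-u_1)$, use control-affinity and the chain rule to get $\frac{d}{d\tau}V(f(s,u_\tau,d)) = \nabla V(f(s,u_\tau,d))^\top g(s)(u_2-u_1)$, conclude from the gradient condition that this is nonnegative, and integrate over $[0,1]$. The paper tacitly uses the same reading of \cref{asm:safe_cone_scalar} that you state explicitly: a sign condition on $\nabla V(f(s,u,d))^\top g(s)$ for every admissible $u$, paired with the cone element $u_2-u_1$, with $V$ differentiable along the image segment. Your caveats are therefore more careful than the original, not a gap.
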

The proof for \cref{lem:monotone_improvement_scalar} follows directly from integrating the monotonicity condition \cref{eqn:gradient_condition} from $\uprf_1$ to $\uprf_2$.
\ifexpand

{\color{expanded2}
\begin{proof}
    Fix an arbitrary $\dummy \in \mathcal{S}$ and $\d \in \review{\mDx}$ and define the scalar function $\psi(\alpha) \triangleq V(f(\dummy, \u_1 + \alpha (\u_2 - \u_1), \d))$, $\forall\alpha \in [0,1]$ specifying the value of the next state interpolating between $\u_1$ and $\u_2$.
    Since $V$ is $C^1$ on $\mathcal{N}(\partial\Omega^*)$ and $f$ is affine in $\u$, $\psi$ is differentiable and
    \begin{equation*}
        \psi'(\alpha) = \nabla V(f(\dummy, \u_1 + \alpha(\u_2 - \u_1), \d))^\top g(\dummy)(\u_2 - \u_1).
    \end{equation*}
    By assumption, $\u_2 - \u_1 \in \mathcal{K}(\mathcal{S})$, so given \cref{asm:safe_cone_scalar,asm:uncertainty_bound_scalar},
    \begin{equation*}
    \label{eqn:notation_issue}
        \begin{split}
            \nabla V(f(\mathbf{\xi}, \u_1 + \alpha(\u_2 - \u_1), \d))^\top g(\mathbf{\xi})(\u_2 - \u_1) \geq 0&\\ 
            \forall \mathbf{\xi} \in \mathcal{S}\ \forall \mathbf{\d} \in \review{\mDx},&
        \end{split}
    \end{equation*}
    implying $\psi(\alpha)' \geq 0\ \forall \alpha\in[0,1]$.
    As a result, integrating along $\alpha$ results in
    \begin{equation*}
        \psi(1) - \psi(0) = \int_0^1\psi'(\alpha)d\alpha \geq 0,
    \end{equation*}
    i.e., $V(f(\dummy, \u_2, \d)) \geq V(f(\dummy, \u_1, \d))$.
    Since $\x$ and $\d$ were chosen arbitrarily, $V(f(\dummy, \u_2, \d)) \geq V(f(\dummy, \u_1, \d))\ \forall \dummy \in \mathcal{S},\ \forall \d \in \review{\mDx}$, thus completing the proof.
\end{proof}}\fi
\begin{thm}
\label{thm:gdn_scalar}
    Consider a system with dynamics \cref{eqn:dynamics}, safe set $\Omega^*$, and true initial state $\x_0\in\Omega^*$.
    If \cref{asm:safe_cone_scalar,asm:uncertainty_bound_scalar} hold,
    then for each $t \in \mathbb{Z}_{\geq 0}$, constructing $\bar{\mathcal{X}}_t$ as specified in \cref{eqn:gdn_nnv} and applying the safety filter \cref{eqn:gdn} results in $\x_t\in\Omega^*$, thereby rendering $\Omega^*$ an invariant set.
\end{thm}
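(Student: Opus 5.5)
We argue by induction on $t$: assuming $\x_t\in\Omega^*$, we show that the filtered input $K_F(\bar{\mathcal{X}}_t,\u_t)$ gives $\x_{t+1}\in\Omega^*$ for every $\d_t\in\mDx$. The base case is the hypothesis $\x_0\in\Omega^*$. Two preliminary facts are used throughout.

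First, the true state lies in the set built in \cref{eqn:gdn_nnv}, i.e. $\x_t\in\bar{\mathcal{X}}_t$. The true observation satisfies $\y_t\in\bar{\mathcal{Y}}_t$, because $\|\Delta\y_t\|_\infty\le\epsilon$. Applying \cref{thm::nnv} on the hyper-rectangle $\bar{\mathcal{Y}}_t$ gives $\hat{\x}_t=\nn(\y_t)\in[\underline{\tilde{\x}}_t,\overline{\tilde{\x}}_t]$. The calibration bound $|\hat{\x}_{t,i}-\x_{t,i}|\le \e_{\hat{\x},i}$ then places $\x_t$ in the enlarged box.

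Second, a pointwise bound follows directly from the definition of $\Phi$: for any $\u$ and $\d$,
\begin{equation*}
V(f(\x_t,\u,\d))\ \ge\ \Phi(\bar{\mathcal{X}}_t,\u).
\end{equation*}
It therefore suffices to show that $\Phi(\bar{\mathcal{X}}_t,K_F(\bar{\mathcal{X}}_t,\u_t))\ge 0$.

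\textbf{Case 1: $\Phi(\bar{\mathcal{X}}_t,\u_t)\ge 0$.} The filter passes $\u_t$ through unchanged, and the pointwise bound gives $V(\x_{t+1})\ge 0$ immediately.

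\textbf{Case 2: $\Phi(\bar{\mathcal{X}}_t,\u_t)<0$.} The filter applies $\u^*_t$. By the argmax in \cref{eqn:gdn_ctrl}, it is enough to exhibit one $\u^\dagger\in\mathcal{U}$ with $\Phi(\bar{\mathcal{X}}_t,\u^\dagger)\ge 0$. \cref{asm:uncertainty_bound_scalar} applies in this case, so we split $\bar{\mathcal{X}}_t$ into two parts.
\begin{itemize}
\item On $\bar{\mathcal{X}}_t\cap\iss$, the definition of $\iss$ gives $V(f(\s,\u,\d))\ge 0$ for every $\u\in\mathcal{U}$ and $\d\in\mDx$. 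Any choice of $\u^\dagger$ is therefore safe there.
\item On $\bar{\mathcal{X}}_t\setminus\iss$, which lies in a convex $\mathcal{S}\subset\mathcal{N}(\partial\Omega^*)$, \cref{asm:safe_cone_scalar} supplies a cone $\mathcal{K}(\mathcal{S})\subseteq\mathbb{R}$. In one dimension this cone is $\{0\}$, $\mathbb{R}_{\ge0}$, $\mathbb{R}_{\le0}$, or $\mathbb{R}$. Take $\u^\dagger$ to be the extreme point of $[\underline u,\overline u]$ in the safe direction: $\overline u$ if $\mathbb{R}_{\ge0}\subseteq\mathcal{K}$, and $\underline u$ if $\mathcal{K}=\mathbb{R}_{\le0}$. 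Then $\u^\dagger-\u\in\mathcal{K}(\mathcal{S})$ for every $\u\in\mathcal{U}$, and \cref{lem:monotone_improvement_scalar} yields
\begin{equation*}
V(f(\s,\u^\dagger,\d))\ \ge\ V(f(\s,\u,\d))\quad \text{for all } \s\in\mathcal{S},\ \d\in\mDx,\ \u\in\mathcal{U}.
\end{equation*}
In other words, $\u^\dagger$ simultaneously maximizes $\min_{\d} V(f(\s,\cdot,\d))$ for every $\s\in\mathcal{S}$.
\end{itemize}

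\textbf{Main obstacle.} We still need this maximum to be nonnegative for each $\s\in\bar{\mathcal{X}}_t\setminus\iss$. That requires $\s\in\Omega^*$, since then $\max_{\u}\min_{\d}V(f(\s,\u,\d))\ge V(\s)\ge 0$ by the dynamic-programming fixed point \cref{eqn:programming}. Only $\x_t$ itself is known to lie in $\Omega^*$; the other points of $\bar{\mathcal{X}}_t$ need not.

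Our first attempt is to read \cref{asm:uncertainty_bound_scalar}, via $\mathcal{N}(\partial\Omega^*)$ and the requirement that $\bar{\mathcal{X}}_t$ be small relative to $\Omega^*$, as forcing every relevant $\s$ to have a safe successor under some control. Combined with the uniform-direction property, this gives $\min_\d V(f(\s,\u^\dagger,\d))\ge 0$ for all $\s\in\mathcal{S}$. Minimizing over both parts of $\bar{\mathcal{X}}_t$ then gives $\Phi(\bar{\mathcal{X}}_t,\u^\dagger)\ge0$.

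If that reading is too strong, the fallback is to avoid the sup over all of $\bar{\mathcal{X}}_t$ altogether. We would instead use the monotonicity directly at $\s=\x_t$: there $\u^\dagger$ dominates the HJ-optimal input $\u^*_{HJ}(\x_t)$, which gives $V(\x_{t+1})\ge 0$. We would then argue that $\u^*_t$ inherits the same guarantee at $\x_t$. This last step is the delicate point, because $\u^*_t$ is defined as an argmax of $\Phi$ rather than of the value at $\x_t$, so one must show that maximizing $\Phi$ moves the input in the cone direction far enough to dominate $\u^*_{HJ}(\x_t)$.
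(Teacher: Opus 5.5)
\textbf{There is a genuine gap.} Your main line reduces the theorem to a claim that is false under the stated hypotheses. You write that it suffices to show $\Phi(\bar{\mathcal{X}}_t,K_F(\bar{\mathcal{X}}_t,u_t))\ge 0$, but this is strictly stronger than the theorem. \cref{asm:uncertainty_bound_scalar} only places $\bar{\mathcal{X}}_t\setminus\iss$ inside a convex subset of $\mathcal{N}(\partial\Omega^*)$. A neighborhood of the boundary contains unsafe states; in the scalar example of \cref{sec:comparisons} it is $(-\infty,-0.02]\cup[0.02,\infty)$. So $\bar{\mathcal{X}}_t$ may contain states from which no admissible input returns to $\Omega^*$ in one step. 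Then $\Phi(\bar{\mathcal{X}}_t,u)<0$ for every $u\in\mathcal{U}$, and no choice of input satisfies your sufficient condition. The paper itself remarks after the proof that $\Phi(\bar{\mathcal{X}}_t,u^*_t)\ge 0$ can be violated (as in its taxiing experiment) while safety still holds. Your ``first attempt'' reading of the assumption is therefore not available. The argument must certify only the single point $x_t$.

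Your fallback is that argument, and it is the paper's, but you leave open exactly the step that carries it, even though you already had the ingredient. In Case 2, if $\max_u\Phi(\bar{\mathcal{X}}_t,u)\ge 0$, your pointwise bound finishes. Otherwise, since the contribution from $\bar{\mathcal{X}}_t\cap\iss$ is nonnegative, $\Phi(\bar{\mathcal{X}}_t,u)=\Phi(\bar{\mathcal{X}}_t\setminus\iss,u)$ for every $u$. Your bullet shows that the extreme point $u^\dagger$ maximizes $\min_{d\in\mDx}V(f(s,\cdot,d))$ simultaneously for all $s\in\mathcal{S}$. That already makes $u^\dagger$ a maximizer in \cref{eqn:gdn_ctrl}; the paper states this as $u^*=\overline{u}$ when $\mathcal{K}=\mathbb{R}_{\ge0}$. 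With $u^*_t=u^\dagger$, the relation $u^*_t-u^*_{HJ}(x_t)\in\mathcal{K}(\mathcal{S})$ is immediate because $u^*_{HJ}(x_t)\in\mathcal{U}$, so no ``far enough'' argument is needed. If $x_t\in\iss$, every input is safe. Otherwise $x_t\in\bar{\mathcal{X}}_t\setminus\iss\subseteq\mathcal{S}$, and \cref{lem:monotone_improvement_scalar} at $s=x_t$ gives $\min_{d\in\mDx}V(f(x_t,u^*_t,d))\ge\min_{d\in\mDx}V(f(x_t,u^*_{HJ}(x_t),d))\ge 0$.

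Two loose ends remain. First, the legitimate part of your worry concerns ties: another maximizer of $\Phi$ with the same value need not dominate at $x_t$, so the argmax must be resolved toward the extreme point, which the paper does implicitly. Second, the case $\mathcal{K}=\{0\}$ in your list leaves $u^\dagger$ undefined and makes the assumption vacuous. The argument needs $\mathcal{K}(\mathcal{S})\in\{\mathbb{R}_{\le0},\mathbb{R}_{\ge0},\mathbb{R}\}$, as the paper takes.
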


\begin{proof}
    We proceed by induction, showing that $\x_t \in \Omega^* \implies \x_{t+1} \in \Omega^*$, where the base case $\x_0 \in \Omega^*$ holds by assumption.
    Then, for the inductive step, suppose $\x_t \in \Omega^*$ and recognize that the filter is either $(i)$ inactive (first case in \cref{eqn:gdn}), or $(ii)$ active (second case \cref{eqn:gdn}).
    For the inactive case, the nominal control value $\u_t$ results in $\Phi(\bar{\mathcal{X}}_t,\u_t) \geq 0$, so $\forall \dummy \in \bar{\mathcal{X}}_t,\forall\d\in\review{\mDx}, V(f(\dummy,\u_t,\d)) \geq 0$.
    Since $V(s)\geq 0 \iff s\in \Omega^*$, $\x_{t+1}\in\Omega^*$ follows for the inactive case.

    For the active case, $\Phi(\bar{\mathcal{X}}_t,\u_t) < 0$, so by \cref{asm:uncertainty_bound_scalar} we have $\bar{\mathcal{X}}_t \subset \mathcal{N}(\partial \Omega^*)\cup\iss$.
If $\dummy\in\iss$, then $f(\dummy,\u,\d) \in \Omega^*\, \forall\u \in \mathcal{U}, \forall\d\in\mDx$, so we may instead consider a convex set $\cvx \supseteq \bar{\mathcal{X}}_t \setminus \iss$ satisfying $\cvx\subset\mathcal{N}(\partial \Omega^*)$ and whose existence is ensured by \cref{asm:uncertainty_bound_scalar}.
Furthermore, \revieww{$\u^*(\cvx)=\u^*(\bar{\mathcal{X}}_t)\triangleq\uprf^*$ since the minimizing state of $\Phi$ cannot lie in $\iss$}.
If calculating $\uprf^*$ via \cref{eqn:gdn_ctrl} results in $\Phi(\cvx, \uprf^*) \geq 0$, then, similar to the inactive case, we get $\x_{t+1}\in\Omega^*$ directly.
Alternatively, \revieww{suppose $\Phi(\cvx, \uprf^*) < 0$.
Since $\x_t \in \Omega^*$,} by construction of $\Omega^*$, $\exists \u^\dagger(\x_t) \in \mathcal{U}$ such that $\min_{\d \in \mDx} V(f(\x_t,\u^\dagger(\x_t),\d)) \geq 0$~\cite{isaacs1954differential}.
\revieww{We now show that $\uprf^* - \uprf^\dagger(\x_t) \in \mathcal{K}(\cvx)$ regardless of which form $\mathcal{K}(\cvx)$ takes.}

From \cref{asm:safe_cone_scalar}, $\mathcal{K}(\cvx) \in \{\mathbb{R}_{\leq0}, \mathbb{R}_{\geq0}, \mathbb{R}\}$ for scalar $\u$.
\revieww{Consider $\mathcal{K}(\cvx) = \mathbb{R}_{\geq 0}$: the monotonicity condition \cref{eqn:gradient_condition} implies $V(f(\dummy,\u,\d))$ is non-decreasing in $\u$ over $\cvx$, so the maximizer in \cref{eqn:gdn_ctrl} lies at the upper bound of $\mathcal{U}$, i.e., $\uprf^* = \overline{u}$.}
\revieww{Since $\uprf^\dagger(\x_t) \in \mathcal{U}$, $\uprf^* \geq \uprf^\dagger(\x_t)$, so $\uprf^* - \uprf^\dagger(\x_t) \in \mathbb{R}_{\geq 0} = \mathcal{K}(\cvx)$}.
\ifexpand
{\color{expanded2}
Suppose instead $\u^* \neq \overline{u}$.
Then $\overline{u} - \u^* \in \mathcal{K}(\cvx)$, so \cref{lem:monotone_improvement_scalar} gives
\begin{equation}
    \label{eqn:show_u_max}
    V(f(\dummy, \u^*, \d)) \leq V(f(\dummy, \overline{u}, \d))\ \forall \dummy \in \cvx,\ \forall \d \in \mDx.
\end{equation}
In particular, $\Phi(\cvx, \u^*) \leq \Phi(\cvx, \overline{u})$, but by definition $\Phi(\cvx, \u) \leq \Phi(\cvx, \u^*)\ \forall \u \in \mathcal{U}$, so $\Phi(\cvx, \u^*) = \Phi(\cvx, \overline{u})$ and $\u^* = \overline{u}$, a contradiction.
}
\fi
A similar argument can be made for $\mathcal{K}(\cvx) = \mathbb{R}_{\leq 0}$.
\revieww{Applying \cref{lem:monotone_improvement_scalar} in either case yields}
\begin{equation}
    \label{eqn:half_plane}
    V(f(\dummy, \u^\dagger(\x_t), \d)) \leq V(f(\dummy, \u^*, \d))\ \forall \dummy \in \cvx,\ \forall \d \in \mDx.
\end{equation}
Finally, if $\mathcal{K}(\cvx) = \mathbb{R}$, then $\nabla V(f(\dummy, \u, \d))^\top g(\dummy) = 0\ \forall \dummy \in \cvx,\ \forall \u \in \mathcal{U},\ \forall \d \in \mDx$, so
\begin{equation}
    \label{eqn:all_reals}
    V(f(\dummy, \u^\dagger(\x_t), \d)) = V(f(\dummy, \u^*, \d))\ \forall \dummy \in \cvx,\ \forall \d \in \mDx.
\end{equation}
\revieww{Evaluating \cref{eqn:half_plane,eqn:all_reals} at the true state $\x_t$ and recalling that $\min_{\d \in \mDx} V(f(\x_t,\u^\dagger(\x_t),\d)) \geq 0$ results in $\min_{\d \in \mDx}V(f(\x_t, \u^*, \d)) \geq 0$, so $\x_{t+1}\in\Omega^*$.}
\end{proof}
Intuitively, this proof shows that at each time step, if the filter is active, \gdn\ pushes the true state in a safe direction, resulting in safety at the next time step.
Interestingly, this does not mean that $\min_{\dummy \in \bar{\mathcal{X}}_t}V(\dummy) \geq 0$ \review{or $\Phi(\bar{\mathcal{X}}_t,\u^*_t)\geq0$} at every time step.
\review{As shown in \cref{sec:numerical_results:taxinet}, these conditions may be violated due to a sudden increase in the size of $\bar{\mathcal{X}}_t$ as a result of} varying sensitivity to $\Delta\y$.
\review{However, safety is still guaranteed because $\x_t \in\bar{\mathcal{X}}_t \cap \Omega^*$, and \revieww{the proof establishes $\Phi(\bar{\mathcal{X}}_t \cap \Omega^*, \u^*_t) \geq 0$ even when $\Phi(\bar{\mathcal{X}}_t, \u^*_t) < 0$.}}


\subsection{Scalability}
There are two scalability considerations associated with this approach.
First, the scalability of the \review{online} NNV \review{calculation} with respect to NN size depends on the specific NNV tool used and can be expected to improve as verification methods advance.
In this work, we use \texttt{CROWN}~\cite{zhang_efficient_2018}, which has polynomial time complexity in both the number of layers and neurons per layer in $L_\theta$.
Second, the scalability limitations of HJ reachability are well known~\cite{bansal_hamilton-jacobi_2017}, since solving \cref{eqn:programming} typically requires discretizing the state space (\review{offline}), leading to exponential complexity in the state dimension.
\revieww{Given this bottleneck, \cref{eqn:gdn_ctrl} can be solved quickly at runtime using derivative-free methods.}
In future work, we plan to explore applying \gdn\ with learned~\cite{bansal_deepreach_2021} and latent-space~\cite{nakamura_generalizing_2025} reachability methods to improve scalability.
\section{Numerical Results}
\label{sec:numerical_results}
This section presents several examples demonstrating \gdn\ and its properties, including comparisons to several CBF-based approaches. All experiments were conducted on a machine running Ubuntu 22.04 with an i7-6700K CPU and 32 GB of RAM using \texttt{jax\_verify}~\cite{jaxverify} for NNV and the \texttt{hj\_reachability} python toolbox~\cite{jax_hj_reachability}.


\subsection{Taxinet Safety}
\label{sec:numerical_results:taxinet}
We first demonstrate \gdn\ on a runway taxiing problem similar to \cite{katz_verification_2022}.
Consider an aircraft with state $\x = [p_x, p_y, \theta]^\top$, where $p_y$ and $\theta$ represent the cross-track error (CTE) and the heading error (HE), measured from the centerline of the runway, respectively, and $p_x$ is the down-track position (DTP).
The dynamics of the aircraft are
\begin{equation*}
    \begin{split}
        & p_{x,t+1} = p_{x,t} + T_s\left(v\cos(\theta_t) + d_t^x\right) \\
        & p_{y,t+1} = p_{y,t} + T_s\left(v\sin(\theta_t) + d_t^y\right) \\
        & \theta_{t+1} = \theta_t + T_s\left(\frac{v}{\ell}\tan(\frac{\pi}{180}\phi_t) + d_t^\theta\right),
    \end{split}
\end{equation*}
where $v=5.0$~m/s is the constant velocity of the aircraft, $\ell=5.0$~m is the distance between its front and back wheels, $\phi_t \in [-12^\circ,12^\circ]$ is the rudder angle, $T_s=0.1$~s, $|d_t^x|,|d_t^y| \leq 0.02$~m, and $|d_t^\theta| \leq 0.01$~rad.
Realistic images of a runway scenario are obtained by simulating the dynamics in X-Plane~11, providing $8\times16$ resolution downsampled images which are passed into the NN estimator $L_\theta$ discussed in \cite{katz_verification_2022}.
There is a simple  nominal controller (proportional control) given by $K_d({{\x}}_t) = -0.74\review{p_{y,t}}-0.44\theta_t$.
The safe set is defined by the runway boundary, i.e., $\review{\mathcal{C}} \triangleq \{\x\ \vert\ |\review{p_{y,t}}| \leq 10\}$.

As presented in \cite{katz_verification_2022,lin_robust_2025}, under nominal conditions, the estimator $L_\theta$ and controller $K_d$ are guaranteed to keep the system safe under most initial conditions.
However, \cref{fig:taxinet_trajectories} shows a trajectory (dashed purple) demonstrating \review{how an aircraft without a filter may violate the safety constraints due to} an adversarial attack \review{$\Delta\y_t$ (calculated using projected gradient descent (PGD)~\cite{madry_towards_2019} with $\epsilon=0.021$)}.
In contrast, when the system uses \gdn\ from the same initial condition (solid black), it calculates $\bar{\mathcal{X}}_t$ (shaded teal) and uses the filter $\review{K_F(\bar{\mathcal{X}}_t, \u_t)}$ \cref{eqn:gdn} to successfully stay on the runway while achieving a total per-step calculation time of $0.032 \pm 0.003$~s.
Moreover, to validate \cref{thm:gdn_scalar}, we generated $100,000$ sample trajectory rollouts $\xi_i$ (light teal), where ${\xi_i = {\{ \x'_t \ \vert \ \x'_{t+1} = P_{\bar{\mathcal{X}}_t}(f(\x'_t, \u_t, \d_t)),\ \d_t \sim \review{\mDx}\ \forall t\in[0,T]\}_i}}$ with each $\{\x_{0}\}_i\sim \bar{\mathcal{X}}_0$ and where $P_{\bar{\mathcal{X}}_t}(\cdot)$ projects a state $\x'_t$ into ${\bar{\mathcal{X}}_t}$ to enforce that it is valid given the observation $\tilde{\y}_t$, i.e., $\exists\d^y\in \mathcal{D}^y$ such that $L_\theta(h(\x'_t, \d^y)) \in \bar{\mathcal{Y}}_t$. 
As shown in \cref{fig:taxinet_trajectories}, even though $\bar{\mathcal{X}}_t$ leaves the safe set between 233 and 236~m, each $\xi_i$ remains safe \review{because when $\Phi(\bar{\mathcal{X}}_t, \u_t) <0$, the filter commands $u_t = -12$ (a maximal right-hand turn), pushing all $\x'_t\in\bar{\mathcal{X}}_t$ in a safe direction.}
\begin{figure}[t]
    \centering
    \includegraphics[width=1.0\linewidth, trim={55 55 125 75pt},clip]{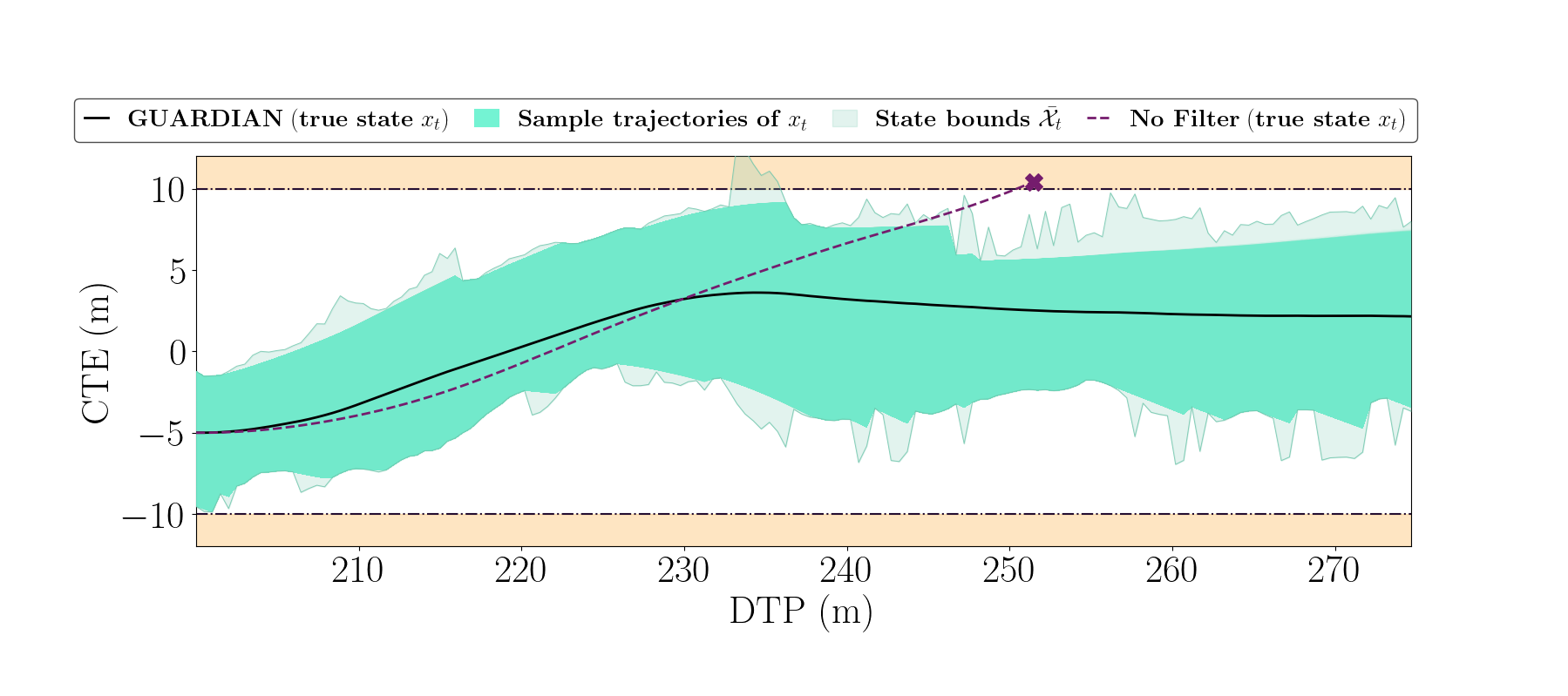}
    \caption{Trajectories of a taxiing aircraft subject to adversarially perturbed observations with and without \gdn.}
    \label{fig:taxinet_trajectories}
    \vspace{-20pt}
\end{figure}

\ifarxiv
\cref{fig:v_contours} shows the value function for the runway problem (DTP is irrelevant for safety, so $V(\x)$ is only a function of $x$ and $\theta$) along with a snapshot of $\bar{\mathcal{X}}_t$ from the .
\begin{figure}[t]
    \centering
    \includegraphics[width=0.7\linewidth, trim={10 0 50 50pt},clip]{Figures/value_function_snapshot_58_r1.png}
    \caption{Contours of $V(\x)$ for a taxiing aircraft.}
    \label{fig:v_contours}
\end{figure}
\fi



\subsection{State-Dependent Vulnerability to Adversarial Attacks}
\label{sec:state-dependent_attack}
This section shows how \gdn\ uses NNV to account for and protect against the state-estimate's sensitivity to measurement perturbations.
Consider the system with state vector $\x = [p_x, p_y, v_x, v_y]^\top$ and dynamics
\begin{equation}
\label{eqn:dynamics:double_integrator}
    \begin{split}
        & \p_{t+1} = \p_t + T_s\left(\v_t + \frac{1}{2}T_s\u_t + \d_t^p\right)\\
        & \v_{t+1} = \v_t + T_s(\u_t + \d_t^v),
    \end{split}
\end{equation}
where $\p = [p_{x}, p_{y}]^\top$, $\v = [v_{x}, v_{y}]^\top$, and $\u = [u_{x}, u_{y}]^\top$ with $\|\u_t \|_\infty \leq 5.0$,  $T_s = 0.1$, $\|\d_t^p\|_\infty \leq 1e^{-5}$, and $\|\d_t^v\|_\infty \leq 1e^{-5}$. \review{Although the system has two control inputs, it can be decomposed into two scalar-input subsystems, enabling separate synthesis of \gdn's safety filter for each.}
Range measurements are taken from a set of landmarks $\ell_i \in \mathbb{R}^2,\ i\in\{1,2,3,4\}$ such that for each time step, $\y_t \in \mathbb{R}^4$ has elements obtained via $\y_{t,i} = \|\ell_i -\p_t\| + \Delta \y_t$, where $\Delta \y_t$ is obtained via \review{PGD~\cite{madry_towards_2019} with $\epsilon = 0.05$}.
We use a learned estimator $L_\theta$, \review{with two hidden layers containing 256 neurons each,} trained to predict $\hat{\x}$ using a history of three measurements $\y_{t:t-2}$, i.e., $\hat{\x}_t = L_\theta(\y_{t:t-2})$.
The nominal controller $K_d(\tilde{\hat{\x}},t)$ is designed to track a circle-like trajectory (shown in \cref{fig:state_vulnerability}) within a $10\times10$~m region defining the safe set.
The attack $\Delta \y_t$ is designed to bring $\tilde{\hat{\x}}_t$ towards the center of the circle, resulting in the true state $\x_t$ trending radially outward toward the unsafe region.
\begin{figure}[t]
    \centering
    \includegraphics[width=0.95\linewidth]{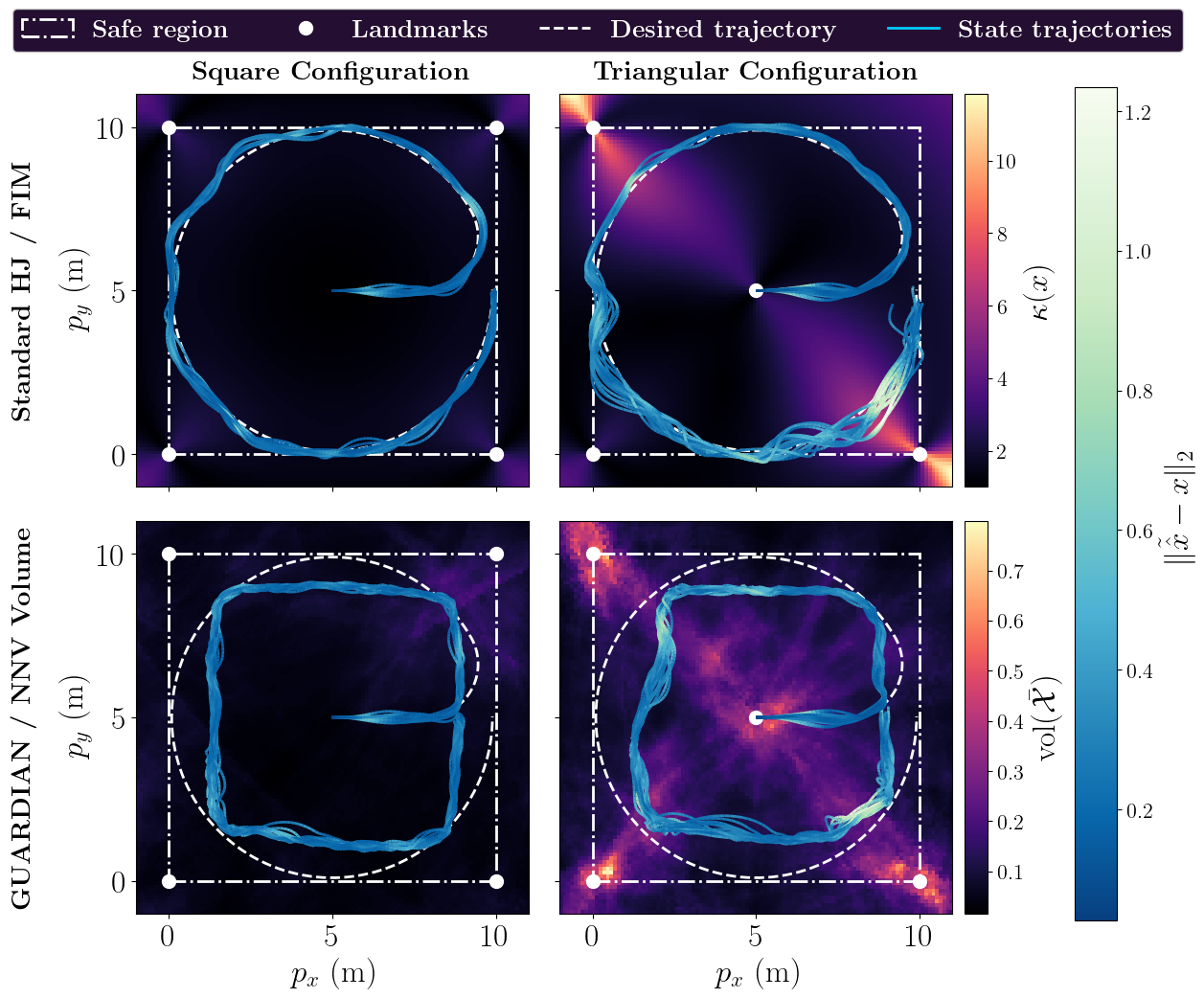}
    \caption{State \review{estimation} error for sample trajectories with different landmark configurations show state-dependent vulnerability to adversarial attacks.
    Triangular landmarks (right) are more vulnerable than square landmarks (left).
    Standard HJ reachability (top) does not ensure safety, but \gdn\ (bottom) does by quantifying state uncertainty.
    FIM condition number (top) is captured by NNV set volume (bottom).}
    \label{fig:state_vulnerability}
    \vspace{-20pt}
\end{figure}


Each panel of \cref{fig:state_vulnerability} shows twenty sample trajectories using standard HJ safety filtering~\cref{eqn:hj} (top row), and \gdn\ (bottom row) for square (left column) and triangular (right column) landmark configurations.
The trajectories are colored to show the amount of estimation error affecting the system \review{throughout each rollout}, indicating how effective the attack was in different regions.
When compared with the square landmark configuration on the left, the right plots show that the triangular landmark configuration results in trajectories with more error, especially near the top left and bottom right portions of their rollouts.
Interestingly, \gdn\ detects this increase in error and acts more conservatively in those regions, as shown by the deflection of the trajectories when compared with the square configuration.

The triangular configuration trajectories also show that the attack is more effective along the diagonal on which most of the landmarks are aligned.
\review{This is because there is redundant information along the diagonal and less information orthogonal to it.}
\review{To quantify this}, we consider an analog to the Fisher information matrix (FIM)~\cite{cintron-arias_sensitivity_2020}: 
$I(\x) = \frac{1}{\epsilon^2} J^\top J,$
where $J\triangleq \frac{\partial h}{\partial \x}$.
We can then quantify the quality of information received by a measurement at $\x$ via the condition number $\kappa(\x) \triangleq \frac{\lambda_1(I(\x))}{\lambda_2(I(\x))}$, where $\lambda_1(I(\x))$ and $\lambda_2(I(\x))$ are the first and second largest eigenvalues of $I(\x)$, indicating how sensitive a measurement at $\x$ is to perturbations.
The heatmap in \cref{fig:state_vulnerability} (top) shows the FIM condition number $\kappa(\x)$ for both configurations of landmarks.
Larger condition numbers indicate measurements with worse information (i.e., those more sensitive to perturbations), so the heatmaps confirm the previously described intuition by showing that the state estimate error increases in poorly conditioned regions.

For comparison to $\kappa(\x)$, the heatmaps in the  bottom plots of \cref{fig:state_vulnerability} show the volume of the state bounds, i.e., $\mathrm{vol}(\bar{\mathcal{X}})$, obtained using NNV with observations at each $\x$ subject to $\epsilon$ perturbations.
As shown by comparing the two right plots, NNV captures similar trends in measurement sensitivity to those calculated analytically with the FIM.
This capability enables \gdn\ to incorporate state-dependent sensitivity analysis into its filtering approach, resulting in a filter that is appropriately conservative depending on varying levels of vulnerability to adversarial attacks.
As a result, \gdn\ (bottom) acts as an effective safety filter, ensuring \review{safety for} all trajectories, whereas the standard HJ \review{filter} (top) does not.
\review{Notably, despite the higher dimensionality of this problem, \gdn\ achieves a total per-step calculation time of $0.0042 \pm 0.0003$~s, faster than the 2D case in \cref{sec:numerical_results:taxinet} due to the use of a smaller NN estimator.}

\subsection{Comparison to MR-CBFs, R-CBFs, and R-CBF-QPs}
\label{sec:comparisons}
Measurement-Robust CBFs (MR-CBFs)~\cite{dean_guaranteeing_2021}, Robust CBFs (R-CBFs)~\cite{nanayakkara_safety_2025},
and adaptive robust CBFs (R-CBF-QPs)~\cite{das_safe_2025} were not specifically designed for defense against adversarial attacks, but their handling of state uncertainty in safety-critical settings means they could be adapted for such purposes.
Thus, to highlight key differences between these approaches and \gdn, we consider the scalar system with dynamics $x_{t+1} = x_t + T_s(u_t + d_t^x)$ with $K_d(x,t) = x_{ref,t} - x_t$, and with a nonlinear observation function $y_t = \frac{1}{4}\mathrm{log}\left(\frac{4}{x_t+3}-1\right)+d_t^y$ where $|d_t^y| \leq \epsilon$, \review{$T_s=0.01$, $|d_t^x| \leq 0.01$}, and \review{$|u_t| \leq 1.0$}.
Solving for $x_t$ in terms of $y_t$ then gives ${L(y_t) \triangleq 4\cdot\mathrm{sigmoid}(4y_t)-3}$, which acts as a perfect estimator in the absence of noise.

\cref{fig:cbf_comparison:0.1} shows that \gdn\ is less conservative than both MR-CBFs and R-CBFs and satisfies the safety constraint where the R-CBF-QP does not.
For each filter, the initial state estimate is $\hat{x}_0 = 0$ and the disturbance term $d_t^y$ is set at a constant value of $\epsilon=0.1$, resulting in the true state $x_t$ always being at the lower bound of possible values given the measurement model and possible noise.
The reference value is set to $x_{ref}=-0.95$ for $t\in[0, 6)$ and $x_{ref}=0.95$ for $t\in[6, 15]$.
Like in \cref{sec:state-dependent_attack}, the shaded NNV state bounds $\bar{\mathcal{X}}_t$ about the state estimate $\hat{x}_t$ for \gdn\ (dashed teal) show that the measurements are much more sensitive to perturbations near the lower boundary.
Using the resulting \review{$\bar{\mathcal{X}}_t$}, \gdn\ allows $\hat{x}_t$ to converge to $x_{ref}$ when it is set near the upper boundary, but intervenes at the lower boundary to ensure $x_t$ (solid teal) stays in the safe set.

\begin{figure}[t]
    \centering
    \captionsetup[subfigure]{aboveskip=0pt,belowskip=0pt}
    \begin{subfigure}{\linewidth}
        \centering
        \begin{tikzpicture}
            \node[inner sep=0pt] (img) {
                \includegraphics[width=0.84\linewidth, trim={0 0 0 0cm},clip]{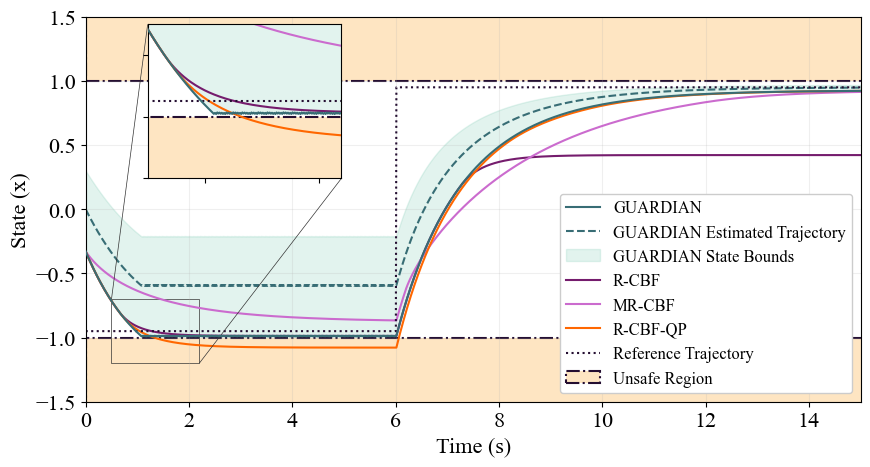}
            };
            \node[anchor=north east] at ([xshift=-4pt,yshift=-5pt]img.north east) {\(\epsilon = 0.1\)};
        \end{tikzpicture}
        \caption{\gdn\ is less conservative than MR-CBFs and R-CBFs, and safer than R-CBF-QPs.}
        \label{fig:cbf_comparison:0.1}
    \end{subfigure}
    \begin{subfigure}{\linewidth}
        \centering
        \begin{tikzpicture}
            \node[inner sep=0pt] (img) {
                \includegraphics[width=0.84\linewidth, trim={0 0 0 0cm},clip]{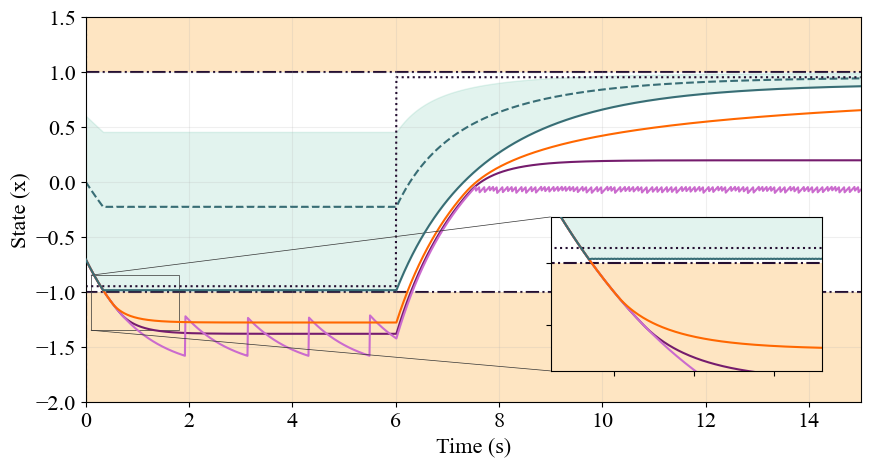}
            };
            \node[anchor=north east] at ([xshift=-4pt,yshift=-3.5pt]img.north east) {\(\epsilon = 0.2\)};
        \end{tikzpicture}
        \caption{\gdn\ effectively handles larger state estimate errors (i.e., from adversarial attacks) than MR-CBFs, R-CBFs, and R-CBF-QPs.}
        \label{fig:cbf_comparison:0.2}
    \end{subfigure}
    \label{fig:cbf_comparison}
    \vspace{-12pt}
    \caption{\gdn\ outperforms MR-CBFs, R-CBFs, and R-CBF-QPs in defense against adversarial attacks.}
    \vspace{-20pt}
\end{figure}

By comparison, R-CBFs assume a maximum state error and \review{thus are equally conservative at both boundaries despite the asymmetric sensitivity.
The adaptively tuned R-CBF-QP (orange) is less conservative, but violates safety at the lower boundary since it is invariant for an inflated version of $\mathcal{C}$ given large estimation errors.}
The MR-CBF (pink) uses the same state bounds as \gdn, but is more conservative in its transient and steady state behavior.

Moreover, as shown in \cref{fig:cbf_comparison:0.2}, each of the CBF-based approaches suffer from performance degradation in the presence of large estimation error.
Above a given threshold $\varepsilon_R$ (resp. $\varepsilon_Q$), R-CBFs (resp. R-CBF-QPs) provide set invariance for an inflated version of the safe set, but not necessarily the original safe set\review{, as shown when $\epsilon$ is increased to $0.2$ in \cref{fig:cbf_comparison:0.2}}.
Similarly, if the estimation error \review{exceeds} a threshold $\varepsilon_M$, no MR-CBF will exist, as shown by the MR-CBF trajectory \review{with failed optimizations in \cref{fig:cbf_comparison:0.2}}.
In contrast, \gdn\ effectively handles the increase in $\epsilon$, \review{appropriately adjusting} conservativeness \review{compared to \cref{fig:cbf_comparison:0.1}} to account for the increase in state uncertainty.
\review{In fact, for this problem, it can be shown analytically that \cref{asm:safe_cone_scalar} is satisfied on the neighborhood $\mathcal{N}(\partial\Omega^*) = (-\infty, -0.02] \cup [0.02, \infty)$ and $\iss = [-0.9899, 0.9899]$; therefore, \gdn\ guarantees safety as long as $\bar{\mathcal{X}}_t \subset (-\infty, 0.9899]$ or $\bar{\mathcal{X}}_t \subset [-0.9899, \infty)$.}
Details on the specification of $\varepsilon_M$, $\varepsilon_R$, and $\varepsilon_Q$ can be found in \cite{dean_guaranteeing_2021}, \cite{nanayakkara_safety_2025}, and \cite{das_safe_2025}, respectively, but \review{their} existence limits the applicability of these approaches in the presence of adversarial attacks, \review{where estimation error is typically larger than that present due to random noise or imperfect learning}.

\section{Conclusion}
\label{sec:conclusion}
This paper presented \gdn: a reachability-based safety filter that incorporates state uncertainty resulting from noisy measurements to defend ONFLs from adversarial attacks.
Specifically, \gdn\ uses advanced NNV tools to determine a set containing all possible true states given a known attack strength and finds a maximally safe control over the entire safe set.
We demonstrated various properties of \gdn\ with three sets of results, including a realistic runway taxiing problem with image observations, and a problem comparing our approach with existing measurement-robust approaches.
In future work, we will seek to improve the scalability of our approach by considering multidimensional inputs and incorporating modern reachability approaches that leverage learned and latent representations.

\bibliographystyle{aiaa}
\balance
\bibliography{refs}
\end{document}